\documentclass[10pt,journal,twocolumn,twoside]{IEEEtran}
\usepackage{amsfonts,amsbsy,bm,amsmath}
\usepackage[nolist]{acronym}
\usepackage{mathrsfs} 
\usepackage{graphicx,cite,amssymb,mathtools,amsthm}
\usepackage{stfloats}
\usepackage{xcolor}
\usepackage{tikz}
\usepackage{pgfplots}
\usepackage{pgf}
\mathtoolsset{showonlyrefs}
\usepackage{bbm}
\usepackage{mathabx}
\usepackage{units}
\usetikzlibrary{arrows.meta,positioning,fit,calc,backgrounds}
\allowdisplaybreaks

\usepackage{subcaption}
\newtheorem{lemma}{Lemma}
\newtheorem{theorem}{Theorem}

\acrodef{URLLC}[URLLC]{ultra reliable low latency communication}
\acrodef{SCL}[SCL]{successive cancellation list}
\acrodef{uBLER}[uBLER]{undetected block error probability}
\acrodef{TO}[TO]{transmission occasion}
\acrodef{LLR}[LLR]{log-likelihood ratio}
\acrodef{BLER}[BLER]{block error rate}
\acrodef{uBLER}[uBLER]{undetected block error rate}
\acrodef{IR}{incremental redundancy}
\acrodef{IR-HARQ}[IR-HARQ]{\ac{IR}-hybrid automatic repeat request}
\acrodef{HARQ}[HARQ]{hybrid automatic repeat request}
\acrodef{CRC}[CRC]{cyclic redundancy check}
\acrodef{RTT}[RTT]{round trip time}
\acrodef{SNR}[SNR]{signal-to-noise ratio}
\acrodef{ACK}[ACK]{positive acknowledgment}
\acrodef{NACK}[NACK]{negative acknowledgment}
\acrodef{LUT}[LUT]{look-up table}
\acrodef{ML}[ML]{machine learning}
\acrodef{MAP}[MAP]{maximum a posteriori}
\acrodef{DMRS}[DMRS]{demodulation reference signal}
\acrodef{MCS}[MCS]{modulation and coding scheme}
\acrodef{RV}[RV]{redundancy version}
\acrodef{LDPC}[LDPC]{low-density parity-check}
\acrodef{BP}[BP]{belief propagation}
\acrodef{SC}[SC]{successive cancellation}
\acrodefplural{i.i.d.}[i.i.d.]{independent and identically distributed}
\acrodef{NR}[NR]{New Radio}
\acrodef{BG2}[BG2]{Base Graph 2}
\acrodef{QPSK}[QPSK]{quadrature phase-shift keying}
\acrodef{AWGN}[AWGN]{additive white Gaussian noise}
\acrodef{RV}[RV]{redundancy version}
\acrodef{SPA}[SPA]{sum product algorithm}

\begin{document}
\title{Enhanced Feedback Mechanisms for Resource-Efficient Incremental Redundancy}

\author{\IEEEauthorblockN{Mustafa Cemil Co\c{s}kun\IEEEauthorrefmark{1}, Ahmed Elkelesh\IEEEauthorrefmark{1}, Avijit Mandal\IEEEauthorrefmark{2}, and Homa Esfahanizadeh\IEEEauthorrefmark{1}}\\
\IEEEauthorblockA{\IEEEauthorrefmark{1}\{mustafa.coskun, ahmed.elkelesh, homa.esfahanizadeh\}@nokia-bell-labs.com; \IEEEauthorrefmark{2}avijit.mandal@duke.edu\\
\IEEEauthorrefmark{1}Nokia Bell Labs; \IEEEauthorrefmark{2}Duke University}}
\maketitle

\begin{abstract}
Incremental redundancy (IR) can reduce error rates by spreading coded bits across multiple transmission attempts. However, conventional stop-and-wait operation with coarse feedback often over-provisions retransmissions, triggers unnecessary decoding attempts, and increases end-to-end latency. This paper develops enhanced feedback and scheduling mechanisms that predict the additional redundancy needed for successful decoding and allocate only the required resources. We study two complementary strategies. First, using channel statistics, we learn a one- or two-shot mapping from channel quality to the minimum redundancy budget. As a byproduct, we derive an achievable reliability lower bound on the error probability of hybrid automatic repeat request (HARQ) systems. Numerical results with polar-coded IR-HARQ scheme show that the bound can be closely approached by appropriately selecting the second-transmission redundancy over a wide SNR range with savings up to 60\% in retransmission size. Second, we propose a realization-aware early-feedback mechanism that uses first-transmission reliability information to make per-codeword decisions before decoding: whether the codeword is already decodable, if not, how many additional redundancy versions are needed, or whether decoding is unlikely and rate adaptation is preferable. Link-level simulations with 5G NR LDPC codes show that both predictors achieve high accuracy (about 96\% in our study), increasing the probability of successful decoding within at most two transmission occasions.
\end{abstract}

\section{Introduction}%
\label{sec:intro}%

Emerging services over wireless networks require reliable and low-latency communication even under bandwidth constraints. For these delay-sensitive applications, conventional retransmission rounds can be costly, especially when feedback and scheduling decisions are separated in time. \ac{IR}, as envisioned in 5G, enables additional resources to be transmitted in multiple stages based on receiver feedback and decoding status. However, existing incremental-redundancy mechanisms often rely on fixed feedback rules and conservative resource allocation, which can lead to unnecessary retransmissions and increased decoding delay. In this paper, we develop enriched feedback and scheduling mechanisms that predict the appropriate additional redundancy for successful decoding and allocate only the required resources, thereby reducing receiver complexity, improving throughput, and lowering end-to-end latency.

In 5G NR, \ac{IR} is standardized for \acp{LDPC} codes via \acp{RV}. A mother codeword is stored in a circular buffer, from which an appropriate number of bits is read starting at specific bit locations determined by the \ac{RV} index, i.e., a mechanism known as rate-matching. Different \ac{RV} indices select different bit subsets for transmission, effectively dividing an \ac{LDPC} codeword into four possibly overlapping chunks. For a transmission configured with multiple \acp{TO}, the RV sequence is determined by the RV ID (see Table~5.1.2.1-2 in TS~38.214 \cite{3gpp38214}). The initial RV can be $0$, $1$, $2$, or $3$, and subsequent occasions usually follow the circular pattern $0\!\to \!2\!\to \!3\!\to\! 1$. The decoder at the receiver can attempt to decode using any combination of received RVs, where it feeds back \ac{ACK} or \ac{NACK} depending on the outcome. Some \acp{RV} are self-decodable, enabling low-latency decoding, while some others may not achieve this depending on the rate\cite{Sarkis21:IR-HARQ} This standardized mechanism, which is also illustrated as Fig. \ref{fig:nr_chain_feedback}, motivates the need for smarter feedback and scheduling that can avoid sending unnecessary redundancy versions while still meeting stringent latency and reliability targets.
\begin{figure*}[t]
\centering
\begin{tikzpicture}[
    >=Latex,
    font=\small,
    blk/.style={
        draw,
        thick,
        minimum width=20mm,
        minimum height=9mm,
        align=center
    },
    arr/.style={->, thick}
]
\node[blk] (info) {Info.\\bits};
\node[blk, right=4mm of info] (enc) {LDPC\\encoder};
\node[blk, right=4mm of enc] (cb) {Circular\\buffer};
\node[blk, right=4mm of cb] (sel) {Bit\\selection};
\node[blk, right=4mm of sel] (dec) {LDPC\\decoder};
\node[blk, right=4mm of dec] (out) {Decoded\\bits};

\draw[arr] (info) -- (enc);
\draw[arr] (enc) -- (cb);
\draw[arr] (cb) -- (sel);
\draw[arr] (sel) -- (dec);
\draw[arr] (dec) -- (out);

\draw[arr, dashed] (dec.south) |- ++(0,-8mm)
  -| node[pos=0.27, above=2pt, fill=white, inner sep=1pt] {ACK/NACK Feedback} (enc.south);
\end{tikzpicture}
\caption{Simplified NR coding chain with HARQ feedback.}
\label{fig:nr_chain_feedback}
\end{figure*}

Beyond LDPC codes, polar-coded \ac{IR-HARQ} schemes (not yet standardized, but there are significant efforts to improve their performance to this end, e.g., see \cite{zhang20261tondecodingcapacityawarerateless,Gross26:IR-HARQ-FastSC} and references therein) also face limitations in optimizing retransmission strategies. Most existing \ac{IR-HARQ} works, e.g., \cite{wesel18, Li16:rateless_polar, MA17:IRHARQpolar, Gross25:IR-HARQ-FastSCL}, assume that any block error can be detected, which holds approximately only if a large outer \ac{CRC} code is used. This incurs significant rate loss for \ac{URLLC} applications with short messages. In this work, we formally present the fact that the \ac{uBLER} of the initial transmission establishes a lower bound on the overall \ac{BLER} of any \ac{IR-HARQ} system. Ignoring this constraint makes optimal bit allocation for retransmissions infeasible, leading to suboptimal resource usage and unnecessary latency. We leverage this observation and propose allocating just enough resources for a single retransmission that matches the reliability lower bound based on the received \ac{SNR}. Since polar codes\cite{arikan2009} are more susceptible to undetectable errors than 5G LDPC codes in practice,\footnote{This is due to the fact that \ac{SCL} decoding is a complete decoder if CRC is skipped (or approximately complete if CRC is short in practice to achieve competitive \ac{BLER} performance for short messages).} the simulations are provided for a polar-coded \ac{IR-HARQ} scheme.

Machine learning techniques have been used to predict \textit{in advance} whether a received packet is decodable, i.e., whether the decoder will converge, before any expensive decoding attempt \cite{decodability-patent, uniBremen}. Along the same lines, learned models have been employed to pre-select which decoder can successfully perform the decoding task while minimizing cost \cite{SiPS-2020}, to estimate how many \ac{BP} iterations will be needed for convergence \cite{decodability-patent}, and to predict the minimum list size needed for successful \ac{SCL} decoding of polar codes \cite{decodability-patent}. However, these approaches either provide a binary decision (i.e., decodable/non-decodable) or target decoder-internal parameters, but none of them estimate how much additional redundancy is needed for successful decoding, which is the main focus of this work. Along similar lines, the authors of \cite{petar25:IR-harq-learned} propose to learn which symbols to retransmit in the case of a failure, which improves the network performance significantly. Nevertheless, such feedback to signal which coded symbols to retransmit becomes increasingly expensive as the code length increases.

We propose enriched feedback and scheduling for \ac{IR} through two contributions. The first uses channel statistics (e.g., \ac{SNR}) to derive a fixed resource-allocation policy across codewords, while the second uses a trained machine-learning model on first-transmission \acp{LLR} to make per-codeword early-feedback decisions on additional redundancy. Specifically, the model predicts the minimum number of additional \acp{RV} required for the successful decoding of each individual codeword. This enables adaptive second-transmission scheduling that avoids both under-allocation (high failure probability) and over-allocation (wasted resources). Compared with fixed or binary-feedback strategies, the proposed methods improve throughput, reduce decoding complexity by avoiding unnecessary full decoding attempts, and lower end-to-end latency by increasing the probability of successful decoding within at most two transmissions.

The remainder of the paper is organized as follows. Section~\ref{sec:system_model} introduces the system model and performance metrics. Section~\ref{sec:contribution1} presents resource prediction based on channel statistics. Section~\ref{sec:contribution2} develops the realization-aware early-feedback mechanism based on first-transmission \acp{LLR}. Numerical results are provided in Section~\ref{sec:numerical_results}, and Section~\ref{sec:conc} concludes the paper.
\section{System Model and Preliminaries}
\label{sec:system_model}

We consider a point-to-point \ac{IR-HARQ} system where a transmitter encodes $k$ information bits into a mother codeword of length $n_{T_{\max}}$ using an $(n_{T_{\max}},k)$ channel code. The mother codeword is partitioned into $T_{\max}$ incremental blocks with lengths $\Delta n_1,\Delta n_2,\ldots,\Delta n_{T_{\max}}$. In the $j$-th \ac{TO}, the transmitter sends incremental block $j$, and the receiver combines all received blocks and attempts decoding based on the accumulated $n_j$ observations, where $n_j \coloneqq \sum_{i=1}^{j} \Delta n_i$ (equivalently, the first $n_j$ received coded bits), and $j\in\{1,\ldots,T_{\max}\}$.

Communication proceeds in a stop-and-wait fashion over a memoryless channel. In a conventional system, after each decoding attempt, the receiver feeds back a single bit: a \ac{ACK} (decoding successful) or a \ac{NACK} (requesting further redundancy). In this work, we generalize the feedback to a message $F_i \in \mathcal{F}$ that may convey richer information, e.g., an indicator for the predicted minimal number of additional redundancy bits $\Delta n_{i+1}$ required for successful decoding in the next round. The classical binary feedback is the special case $\mathcal{F} = \{\text{ACK}, \text{NACK}\}$. For the $i$-th \ac{TO}, $E_i$, $E_i^{(d)}$ and $E_i^{(u)}$ denote the block error event, the \emph{detected} error event, i.e., the event that an error is present and the receiver is aware of it (e.g., via a CRC check), and the \emph{undetected} error event, i.e., the event that an error is present but the receiver is not aware of it (e.g., when the CRC check is satisfied), respectively. The complement $\overline{E}_i$ denotes successful decoding for the $i$-th \ac{TO} and $E$ denotes the block error event after the receiver stops asking for retransmission due to either an undetected error or reaching the maximum number of allowed retransmissions. Any communication system aims at minimizing $\Pr\left\{E\right\}$ subject to latency and resource constraints.

\subsection{Timing Model and Latency}
\label{sec:latency}
We adopt a slot-based timing model in which each transmission--feedback cycle occupies one \ac{RTT}. One \ac{RTT} encompasses the transmission time, propagation delay, receiver processing (including a decoding attempt), and the return of feedback. This model is appropriate when the coded block lengths under consideration fit within a single scheduling slot.

Let $T \in \{1, 2, \ldots, T_{\max}\}$ denote the random variable indicating the number of transmission rounds required to either successfully decode or exhaust all redundancy. The \emph{expected latency} per information block is
\begin{equation}
\label{eq:latency}
  L \;\coloneqq\; \mathbb{E}[T] \cdot \mathrm{RTT}
\end{equation}
where the average number of transmissions is given by
\begin{equation}\label{eq:avg_transmission}
    \mathbb{E}[T] = 1 + \sum_{i=1}^{T_{\max}-1} \Pr\!\left\{ E_{1:i}^{(d)} \right\}
\end{equation}
with the definition $E_{1:i}^{(d)}\coloneqq E_1^{(d)}\cap E_2^{(d)}\cap\cdots\cap E_i^{(d)}$ and the convention that the first round is mandatory. In a practically relevant two-transmission case, i.e., $T_{\max} = 2$, the expression \eqref{eq:avg_transmission} reduces to
\begin{equation}
\label{eq:latency_two}
  \mathbb{E}[T] \;=\; 1 + \Pr\!\left\{E_1^{(d)}\right\}.
\end{equation}
Note that, under this model, the latency depends on the detected error probability of the first-transmission but \emph{not} on the retransmission length $\Delta n_2$. This suggests to choose $\Delta n_2$ as large as possible while fitting within a single scheduling slot in order to minimize $\Pr\left\{E\right\}$; however, the retransmission length affects the resource consumption and throughput defined next.

\subsection{Throughput}
\label{sec:throughput}

Throughput, denoted as $\eta$, measures how efficiently the channel is used per successfully delivered information bit. It is defined as $\eta \coloneqq \frac{\mathbb{E}[K]}{\mathbb{E}[N]}$ where $\mathbb{E}[K]$ is the expected number of successfully delivered information bits per block and $\mathbb{E}[N]$ is the expected total number of transmitted coded bits per message. Hence, it should be computed as
\begin{equation}
    \eta = \frac{\mathbb{E}[K]}{\mathbb{E}[N]} =   \frac{k \sum_{i=1}^{T_{\max}} \Pr\!\left\{E_{1:i-1}^{(d)}\cap \overline{E}_i \right\}}{n_1 + \sum_{i=2}^{T_{\max}} \Delta n_i\,\Pr\!\left\{E_{1:i-1}^{(d)}\right\}} \label{eq:tput}
\end{equation}
with the convention that empty intersections equal the certain event. For the case of $T_{\max} = 2$, throughput \eqref{eq:tput} reduces to
\begin{equation}
    \eta = \frac{k\,\left(\Pr\!\left\{\overline{E}_1\right\}
              \;+\; \Pr\!\left\{\overline{E}_2 \cap E_1^{(d)}\right\}\right)}{n_1 + \Delta n_2\,\Pr\!\left\{E_1^{(d)}\right\}}.
\end{equation}
Observe that increasing $\Delta n_2$ results in larger $\Pr\left\{\overline{E}_2 \cap E_1^{(d)}\right\}$; hence, increased $\mathbb{E}[K]$. Nevertheless, too many redundancy bits in a retransmission reduces $\eta$ by increasing $\mathbb{E}[N]$ while allocating too few causes decoding failure resulting in a non-optimal $\Pr\left\{E\right\}$. Hence, the goal of this work is to predict the minimum $\Delta n_2$ that ensures maximum reliability under strict latency requirements, e.g., $T_{\max}=2$, without wasting unnecessary resources which may reduce throughput significantly.
\section{Resource Prediction from Channel Statistics}
\label{sec:contribution1}

This section develops a channel-statistics-driven feedback mechanism for one-/two-shot scheduling of \ac{IR}. Using an \ac{SNR} estimate and the selected \ac{MCS}, the receiver predicts the minimum redundancy budget needed to meet a target reliability and feeds back this compact decision, reducing unnecessary retransmissions and decoding attempts.

\subsection{One-Shot Resource Prediction for Reliable Decoding}
We use the channel statistics (e.g., \ac{SNR}) together with the selected \ac{MCS} to minimize the resource budget required to meet a target one-shot-transmission reliability. Concretely, we construct an SNR-based \ac{LUT} that indicates how many resources are needed for a target success rate within a single \ac{TO}. With this one-shot scheduling, all required redundancy is allocated upfront, minimizing the scheduling effort and retransmission overhead, and thereby increasing the overall throughput.

\subsection{Two-Shot Resource Prediction with \ac{uBLER} Constraints}
\label{sec:uBLERtheorem}
Consider an \ac{IR-HARQ} scheme that uses up to $T_{\max}$ transmission rounds with an error detection mechanism after each transmission. 
\begin{lemma}
\label{prop:floor}
For any \ac{IR-HARQ} scheme, the overall block error probability satisfies
\begin{equation}
\Pr\!\left\{E\right\} \geq \Pr\{E_1^{(u)}\}.
\label{eq:floor}
\end{equation}
\end{lemma}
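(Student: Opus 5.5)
The plan is a direct event-inclusion argument: I will show that the undetected first-round error event $E_1^{(u)}$ is contained in the final error event $E$. Monotonicity of probability then gives \eqref{eq:floor}. The argument is structural and does not depend on the code, the retransmission lengths $\Delta n_i$, the feedback alphabet $\mathcal{F}$, or $T_{\max}$.

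First, I would fix the protocol semantics from Section~\ref{sec:system_model}. After the first \ac{TO} the receiver sends feedback based only on what it can observe. On $E_1^{(u)}$ the error-detection check (e.g., the CRC) is satisfied, so the receiver declares success and sends an \ac{ACK}, or, with the generalized feedback $F_1$, whatever message signals that no further redundancy is needed. By definition of $E$, the receiver then stops requesting retransmissions, and the message it delivers is the erroneous first-round estimate. Hence, on $E_1^{(u)}$, the process ends at $T=1$ with a block error, so $E_1^{(u)}\subseteq E$.

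Next, I would make the decomposition explicit to show the inequality can be strict. I would write
\[
\Pr\{E\}=\Pr\{E_1^{(u)}\}+\Pr\{E\cap E_1^{(d)}\},
\]
which holds because $E_1^{(u)}$ and $E_1^{(d)}$ are disjoint and $E_1^{(u)}\cup E_1^{(d)}=E_1$. It also uses the fact that on $\overline{E}_1$ decoding succeeds and the receiver stops, so $E\cap\overline{E}_1=\emptyset$. The second term is nonnegative, which gives the bound. It vanishes only if every detected first-round error is eventually corrected without a later undetected error. This corresponds to the ``achievable'' reading used later for choosing $\Delta n_2$.

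The main point needing care is not computational. It is justifying that no \ac{IR-HARQ} scheme can escape $E_1^{(u)}$. One might object that a scheme could keep retransmitting anyway, or use richer feedback. I would handle this by noting that any stopping or feedback rule is a function of the receiver's observations. On $E_1^{(u)}$ these observations are indistinguishable, as far as the detector is concerned, from a correct decoding. Under the stated model, in which the receiver stops once its check passes, the scheme therefore terminates. If a scheme were allowed to retransmit despite a passing check, that would amount to a different, non-stop-and-wait protocol, which lies outside the model considered here.
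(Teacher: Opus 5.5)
Your proposal is correct and follows the paper's own proof: the paper also argues that an undetected first-round error ends the retransmission process with a block error, so $E_1^{(u)}\subseteq E$, and the bound then follows by monotonicity of probability. Your extra decomposition $\Pr\{E\}=\Pr\{E_1^{(u)}\}+\Pr\{E\cap E_1^{(d)}\}$ is the same one the paper uses next, in the proof of the achievability theorem.
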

\begin{IEEEproof}
Observe that any undetected error event after the first transmission terminates the retransmission events and leads to a block error; hence, we have
\begin{equation}
    E_1^{(u)} \subseteq E.
    \label{eq:subevent}
\end{equation}
This concludes the error floor behaviour described by \eqref{eq:floor}.
\end{IEEEproof}
Let $\Delta n\coloneqq \sum_{i=2}^{T_{\max}}\Delta n_i$. Next, we show that the equality in Equation \eqref{eq:floor} is achievable, which will lead to an effective IR-HARQ design principle.
\begin{theorem} \textbf{(Achievability of the lower bound)}
    The equality holds when $\Delta n\rightarrow \infty$ under an optimum decoder, e.g., \ac{MAP} decoder.
\end{theorem}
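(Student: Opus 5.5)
The plan is to decompose the overall error event according to what happens after the first transmission and show that every term other than $E_1^{(u)}$ vanishes as $\Delta n\to\infty$. Since the receiver stops after an undetected error, the events $\overline{E}_1$, $E_1^{(u)}$ and $E_1^{(d)}$ partition the sample space. On $\overline{E}_1$ no error occurs. On $E_1^{(u)}$ an error always occurs, by \eqref{eq:subevent}. Hence
\begin{equation}
\Pr\{E\} = \Pr\{E_1^{(u)}\} + \Pr\{E \cap E_1^{(d)}\}.
\end{equation}
Combined with Lemma~\ref{prop:floor}, it suffices to prove that $\Pr\{E\cap E_1^{(d)}\}\to 0$ as $\Delta n\to\infty$.

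Next, bound the residual term. On $E_1^{(d)}$ a retransmission is requested. The event $E\cap E_1^{(d)}$ then requires that the decoding after some later round $j\ge 2$ ends in error: either an undetected error, or a detected error that persists up to $T_{\max}$. In both cases the decoder output after round $j$ is wrong. For a fixed $T_{\max}$, a union bound gives
\begin{equation}
\Pr\{E\cap E_1^{(d)}\} \le \sum_{j=2}^{T_{\max}} \Pr\{E_j\},
\end{equation}
where $E_j$ is the MAP decoding error based on $n_j$ observations. The cleanest route is to let the redundancy grow in the final round, with $\Delta n_{T_{\max}}\to\infty$, or simply take $T_{\max}=2$ with $\Delta n_2=\Delta n\to\infty$. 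Then the bound reduces to the single term $\Pr\{E_2\}$, which is the MAP error of a code with $k$ fixed and blocklength $n_1+\Delta n\to\infty$, i.e., rate $\to 0$.

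The main obstacle is justifying that this MAP error tends to zero. The argument has two parts.

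First, for a memoryless channel with positive capacity and fixed $k$, the MAP error of a suitably designed mother code vanishes as the number of independent channel uses goes to infinity. I would argue this by the channel coding theorem: any rate below capacity is achievable, and the rate $k/(n_1+\Delta n)\to 0$. More elementarily, repeating the $k$ bits over $\Delta n/k$ uses each and applying majority or MAP decoding yields an error bounded by $k\,e^{-c\,\Delta n/k}\to 0$. Since MAP is optimal, the MAP decoder of the IR code does at least as well. This holds provided the code family is rich enough, as in a rateless or random IR ensemble, and I would state this as an implicit assumption of the theorem.

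Second, the detection mechanism must not spuriously terminate. This is harmless for the bound, because termination with a correct output is not an error, and any error in round $j$ is already counted in $\Pr\{E_j\}$.

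Combining the two parts gives $\Pr\{E\}\to\Pr\{E_1^{(u)}\}$, so the equality in \eqref{eq:floor} is achieved in the limit.
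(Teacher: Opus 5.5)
Your decomposition $\Pr\{E\}=\Pr\{E_1^{(u)}\}+\Pr\{E\cap E_1^{(d)}\}$ is exactly the paper's \eqref{eq:decomp2}. The two arguments differ in how the residual term is shown to vanish.

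\textbf{The paper's route.} It keeps the residual as $\Pr\{E_1^{(d)}\}\Pr\{E\mid E_1^{(d)}\}$ and equates the conditional success probability with the MAP success probability on the full observation $Y^{n_{T_{\max}}}$. It lower-bounds this by $2^{-H(M\mid Y^{n_{T_{\max}}})}$, using $\max_i p_i\ge 2^{-H(\mathbf p)}$ and Jensen. It then argues $H(M\mid Y^n)\to 0$ because $I(M;Y^n)$ is monotone, bounded by $k$, and has strictly positive increments for a ``reasonable'' scheme.

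\textbf{Your route.} You union-bound the unconditional residual by $\sum_{j\ge 2}\Pr\{E_j\}$. You then show directly that the MAP error vanishes, via the coding theorem at vanishing rate or an explicit repetition bound.

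\textbf{What each buys.} Your version avoids the paper's silent replacement of a conditional probability by an unconditional one. It also exposes the role of intermediate rounds, which the paper never addresses because it works only with $Y^{n_{T_{\max}}}$. It gives an explicit exponential rate as well. The paper's entropy route needs no decoder analysis and reduces everything to $H(M\mid Y^n)\to 0$. However, its last step is informal: positive increments $I(M;Y_{n+1}\mid Y^n)$ can be summable, so they do not by themselves force the limit to equal $k$. Both arguments therefore rest on a structural assumption about the code; yours is the more explicit of the two.

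\textbf{Two things to fix.}

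First, for $T_{\max}>2$, letting only the final round grow does not reduce your union bound to $\Pr\{E_2\}$. The terms $\Pr\{E_j\}$ with $2\le j<T_{\max}$ do not vanish. In fact $\Pr\{E\}\ge\Pr\{E_1^{(u)}\}+\Pr\{E_1^{(d)}\cap E_2^{(u)}\}$, and the second term is generically positive and independent of $\Delta n_3,\dots,\Delta n_{T_{\max}}$. The right condition is $\Delta n_2\to\infty$, which makes every $n_j$ with $j\ge 2$ diverge; alternatively, take $T_{\max}=2$.

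Second, the claim ``since MAP is optimal, the MAP decoder of the IR code does at least as well'' as the repetition decoder holds only if those repetitions are actually among the transmitted symbols. MAP optimality compares decoders acting on the same observations, not different codes. State the assumption instead as: the incremental redundancy, viewed as a standalone code on the (CRC-extended) message, has vanishing error as its length grows. Then MAP on $Y^{n_j}$ does at least as well as the decoder that discards the first $n_1$ observations, and your argument is complete.
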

\begin{IEEEproof}
Using the law of total probability, we have
\begin{align}
\!\!\!\!\!\Pr\{E\} 
&= \Pr\{E\cap \overline{E_1}\} + \Pr\{E \cap E_1^{(u)}\} + \Pr\{E \cap E_1^{(d)}\} \label{eq:decomp1}\\
&= \Pr\{E_1^{(u)}\} + \Pr\{E_1^{(d)}\}\cdot \Pr\{E \mid E_1^{(d)}\}
\label{eq:decomp2}
\end{align}
where \eqref{eq:decomp2} follows from Equation \eqref{eq:subevent} together with the chain rule and by observing that $E$ and $\overline{E_1}$ are disjoint.

It remains to show that the conditional term $\Pr\{E \mid E_1^{(d)}\}$ vanishes asymptotically as $\Delta n\to\infty$ with $k$ fixed. Let $M\in\{1,\dots,2^k\}$ denote the (uniform) message index and let $Y^{n_{T_{\max}}}$ denote the collected channel output with cardinality $n_{T_{\max}}=\Delta n_1+\Delta n$. To bound the \ac{MAP} decoding error probability, note that the conditional probability of correct decoding satisfies
\begin{align}
    1-\Pr\{E \mid E_1^{(d)}\} &= \mathbb{E}\Big[\max_{m} P_{M|Y^{n_{T_{\max}}}}\big(m\,\big|\,Y^{n_{T_{\max}}}\big)\Big] \label{eq:MAP1}\\
    &\geq \mathbb{E}\Big[2^{-H\big(P_{M|Y^{n_{T_{\max}}}}(\cdot\mid Y^{n_{T_{\max}}})\big)}\Big] \label{eq:MAP2}\\
    &\geq 2^{-\mathbb{E}\Big[H\big(P_{M|Y^{n_{T_{\max}}}}(\cdot\mid Y^{n_{T_{\max}}})\big)\Big]} \label{eq:MAP3}\\
    &= 2^{-H\big(M\mid Y^{n_{T_{\max}}}\big)} \label{eq:MAP4}
\end{align}
where \eqref{eq:MAP1} is the definition of \ac{MAP} decoding (the expectation is over $Y^{n_{T_{\max}}}$), \eqref{eq:MAP2} uses the inequality $\max_i p_i\ge 2^{-H(\mathbf p)}$ for any probability distribution $\mathbf p=(p_1,\dots,p_{2^k})$, \eqref{eq:MAP3} follows from Jensen's inequality (since $2^{-x}$ is convex), and \eqref{eq:MAP4} is the definition of conditional entropy. It therefore suffices to show that $H\big(M\mid Y^{n_{T_{\max}}}\big)\to 0$ as $\Delta n$ grows, i.e., as $n_{T_{\max}}\to\infty$. 

Since $I\big(M;Y^{n}\big)$ is monotone non-decreasing in $n$ and upper bounded by $H(M)=k$, the limit $\lim_{n\to\infty} I(M;Y^n)$ exists. Next, we rule out convergence to a strict sub-ceiling $L<k$. For a non-degenerate channel and a reasonable IR-HARQ scheme, the conditional mutual information $I(M;Y_{n+1}\mid Y^n)$ is strictly positive whenever uncertainty about message $M$ remains. This means that $L=k$, and  we have
\begin{equation}
H\big(M\mid Y^{n_{T_{\max}}}\big)=k-I\big(M;Y^{n_{T_{\max}}}\big) \xrightarrow[\Delta n\to\infty]{} 0 \label{eq:limit}
\end{equation}
Combining \eqref{eq:limit} with \eqref{eq:MAP4} yields $\Pr\{E \mid E_1^{(d)}\}\to 0$. Substituting into \eqref{eq:decomp2} concludes the proof.
\end{IEEEproof}
Although it is an asymptotic result, Theorem 1 sets a realistic design principle for the second transmission as to minimize the error probability without spending more than needed bits for the second transmission. In other words, we choose the smallest $\Delta n_2$ such that $\Pr\left\{E\right\} \leq \beta\Pr\{E_1^{(u)}\}$ with $\beta\geq 1$. For the numerical examples, we choose $\beta$ to be very close to $1$ so that the optimization looks for the approximation $\Pr\left\{E\right\} \approx \Pr\{E_1^{(u)}\}$ to decide for $\Delta n_2$ for various \ac{SNR} values. In practice, \ac{LUT}s or expanded MCS tables can be constructed by fixing $\beta$ for chosen applications\cite{patent_polar_irharq}.

\section{Resource Prediction from First Transmission Occasion}
\label{sec:contribution2}

The mechanism proposed in this section makes \emph{per-codeword} scheduling decisions by exploiting reliability information available after the first \ac{TO} and before any decoding attempt is made. Specifically, a lightweight trained predictor maps first-transmission observations (e.g., the received \acp{LLR}) to the minimum additional redundancy needed for successful decoding, producing the enriched feedback message $F_1 \in \mathcal{F}$ introduced in Section~\ref{sec:system_model}.

As a concrete setup, consider 5G NR \ac{LDPC} codes with up to four available \acp{RV} (RV~0 through RV~3), where each RV corresponds to one incremental resource unit. After the first \ac{TO}, the predictor outputs one of the following decisions for each codeword:
\begin{enumerate}
    \item the codeword is already decodable with the received first \ac{RV} alone (no retransmission needed);
    \item the number of additional \acp{RV} required for successful decoding (one, two, or three).
    \item the codeword cannot be decoded even with all available redundancy, prompting an \ac{MCS} adjustment rather than further retransmission.
\end{enumerate}

This early-feedback approach results in two major benefits over classical binary \ac{ACK}/\ac{NACK}:
\begin{itemize}
    \item \textbf{Resource efficiency:} Unnecessary retransmissions and redundant decoding attempts are avoided, since the predictor directly estimates the required resources without invoking the full decoder. Over-allocation is also prevented, freeing resources for other users.
    \item \textbf{Lower latency:} By scheduling the correct amount of redundancy in a single retransmission, multiple stop-and-wait rounds are eliminated. In the best case, successful decoding is achieved within at most two \acp{TO}, reducing the expected latency from $\mathbb{E}[T] \cdot \mathrm{RTT}$ under iterative HARQ to at most $2 \cdot \mathrm{RTT}$.
\end{itemize}

\section{Numerical Results}
\label{sec:numerical_results}
For the numerical results, we consider \ac{AWGN} channel with \ac{QPSK} modulation, where \ac{SNR} is expressed as $E_s/N_0$, where $E_s$ is the energy per symbol and $N_0$ is the single-sided noise power spectral density. We consider two classes of channel codes considered in the 5G \ac{NR} standard, namely polar and \ac{LDPC} codes.
\subsection{Polar Codes}
In the case of polar codes, we picked the option with $11$-bit \ac{CRC} in 5G \ac{NR} to obtain a trade-off between the undetected and overall \acp{BLER}. For the inital tranmission, a $(256, 128)$ CRC-aided polar code is used, whose \ac{BLER} under \ac{SCL} decoding with list size $L=8$\cite{tal_vardy15} is provided in Fig. \ref{fig:bler}. As a reference, the \ac{BLER} of the $(512, 128)$ polar code is also provided in the same figure, where the codes are designed using the guidelines described in \cite{MA17:IRHARQpolar} such that the shorter one can be obtained from the longer one via puncturing.\footnote{Note that arbitrary lengths (not only powers of two) can be obtained using quasi-uniform puncturing\cite{Yuan18:IRHARQ} as we use in the simulations.} Observe that the \ac{BLER} curve for \ac{IR-HARQ} scheme with $\Delta n_1=\Delta n_2 = 256$ matches that of $(512,128)$ as targeted by the classical IR-HARQ designs \cite{Li16:rateless_polar,MA17:IRHARQpolar,Yuan18:IRHARQ} till an error floor behaviour due to the tail of \ac{uBLER} performance of the first transmission as proven in Section \ref{sec:uBLERtheorem}. This means that, at lower SNR values than roughly $-1.5$ dB, improved reliability is possible without the need for third transmission, and, at higher SNR values, lower number of redundancy bits suffices to achieve the reliability lower bound. To this end, we obtain a vector as $\Delta n_2 (\mathbf{SNR}) = [544, 444, 256, 200, 144, 100]$ by optimizing $\Delta n_2$ as a function of $\mathbf{SNR} = [-3, -2, \dots, 2]$ via simulations. Then, the performance of the optimized \ac{IR-HARQ} scheme matches the target \ac{uBLER} tightly with strict requirement of $T_{\max}=2$ transmissions, where the extra redundancy to be transmitted ranges from $544$ to $100$ bits.\footnote{Note also that, for total lengths longer than $512$ bits at low SNR values, we still use the method in \cite{MA17:IRHARQpolar} to achieve a high-performing code construction.} This means that, for the exemplary case of $\mathrm{SNR} = 2$ dB, $60\%$ less redundancy is transmitted compared to the scheme where the size of the retransmission is pre-allocated same as the initial transmission. 
\begin{figure}[t]
\centering
\begin{tikzpicture}
\begin{semilogyaxis}[
    width=0.9\columnwidth,
    height=0.75\columnwidth,
    xlabel={SNR (dB)},
    ylabel={BLER/uBLER},
    xmin=-3.5, xmax=2,
    ymin=1e-4, ymax=2,
    grid=both,
    major grid style={dashed,gray!50},
    minor grid style={dashed,gray!25},
    legend style={
        font=\scriptsize,
        at={(0.02,-0.62)},
        anchor=south west,
        cells={anchor=west},
        fill=white,
        fill opacity=0.85,
        text opacity=1,
        draw=black!50,
    },
    legend cell align=left,
    tick label style={font=\small},
    label style={font=\small},
    mark size=2.2pt,
    every axis plot/.append style={line width=0.9pt},
    cycle list name=color list,
]

\addplot+[mark=o] coordinates {
    (-3.5, 1.000000) (-3.0, 1.000000) (-2.5, 1.000000)
    (-2.0, 0.999758) (-1.5, 0.997315) (-1.0, 0.987105)
    (-0.5, 0.946320) ( 0.0, 0.832610) ( 0.5, 0.619530)
    ( 1.0, 0.346480) ( 1.5, 0.132620) ( 2.0, 0.032670)
};
\addlegendentry{$(256, 128)$}

\addplot+[mark=triangle*] coordinates {
    (-3.5, 0.68493 ) (-3.0, 0.42    ) (-2.5, 0.18382 )
    (-2.0, 0.054289) (-1.5, 0.007654) (-1.0, 0.00081883)
};
\addlegendentry{$(512, 128)$}

\addplot+[mark=star] coordinates {
    (-3.5, 0.69444 ) (-3.0, 0.43431 ) (-2.5, 0.170068)
    (-2.0, 0.048321) (-1.5, 0.012204) (-1.0, 0.004693)
    (-0.5, 0.003760) ( 0.0, 0.003310) ( 0.5, 0.002320)
    ( 1.0, 0.001380) ( 1.5, 0.000440) ( 2.0, 0.000220)
};
\addlegendentry{IR-HARQ with $(\Delta n_1=\Delta n_2=256, k=128)$}

\addplot+[mark=*, dashed, color=cyan] coordinates {
    (-3.5, 0.005208) (-3.0, 0.004343) (-2.5, 0.005102)
    (-2.0, 0.003503) (-1.5, 0.003570) (-1.0, 0.003837)
    (-0.5, 0.003710) ( 0.0, 0.003310) ( 0.5, 0.002320)
    ( 1.0, 0.001380) ( 1.5, 0.000440) ( 2.0, 0.000220)
};
\addlegendentry{uBLER with $(256,128)$}

\addplot+[mark=+] coordinates {
    (-3.0, 0.0044478 ) (-2.0, 0.0041578 ) (-1.0, 0.0042651 )
    ( 0.0, 0.0033569 ) ( 1.0, 0.0013605 ) ( 2.0, 0.00015766)
};
\addlegendentry{IR-HARQ with $(\Delta n_1 = 256, \Delta n_2(\mathrm{SNR}),\,k)$}

\end{semilogyaxis}
\end{tikzpicture}
\caption{\ac{BLER}/\ac{uBLER} vs. \ac{SNR} on the \ac{AWGN} channel, where the proposed IR-HARQ scheme with varying $\Delta n_2$ per SNR is compared to the baseline with the same size for both first and the second transmission.}
\label{fig:bler}
\end{figure}
\subsection{LDPC codes}
\label{sec:dataset1}

For the \ac{LDPC}-based \ac{HARQ} study, we generate a labeled dataset using a link-level 5G \ac{NR} simulation chain implemented with the 5G Toolbox in MATLAB. Each packet is processed by \ac{CRC} attachment, \ac{NR} \ac{LDPC} encoding, rate matching, \ac{QPSK} modulation, \ac{AWGN} transmission, soft demodulation, rate recovery, iterative \ac{LDPC} decoding (12 iterations of the flooding sum-product algorithm), and \ac{CRC} verification, with up to four \ac{HARQ} rounds. The transport-block size is fixed to $\mathrm{TBS}=3\,816$~bits with effective code rate $R = 308/1024$, and the \ac{RV} sequence is set to $[0,2,3,1]$. For each trial, $E_b/N_0$ is drawn uniformly from the set $\{-9,-8,\ldots,9\}$~dB. The input feature is the recovered first-round \ac{LLR} vector of length 19\,200 (corresponding to the maximum \ac{BG2} circular-buffer length). The label is the minimum number of \acp{RV} required for successful decoding: rounds~1--4 are mapped to classes~1--4, and failure after round~4 is mapped to class~5. To balance the classes, we use class-conditional acceptance (rejection sampling) to collect the same number of samples per class.

\subsubsection{Prediction Based on \ac{SNR}}
In practice, the receiver estimates the operating \ac{SNR} from reference signals such as the \ac{DMRS}. This estimate can serve as input to a predictor that determines, for a given \ac{MCS}, the minimum number of \acp{RV} required for successful decoding or whether successful decoding is infeasible with the available redundancy. We train a logistic-regression classifier that partitions the \ac{SNR} range into contiguous decision regions, each mapped to a specific resource allocation. A separate model is needed per \ac{MCS}.

Fig.~\ref{fig:snr_prediction_model_ldpc} illustrates the five-class prediction model for a representative 5G \ac{LDPC} code and modulation from the generated dataset. The model assigns each code block to one of five classes. For precise channel-quality knowledge, the \ac{SNR}-based classifier achieves $96\%$ accuracy. However, when the channel estimate deviates from actual conditions, e.g., due to estimation error or mobility, the prediction accuracy degrades.

\begin{figure}[t]
    \centering
    \begin{subfigure}[t]{\columnwidth}
        \centering
        \includegraphics[width=0.8\columnwidth]{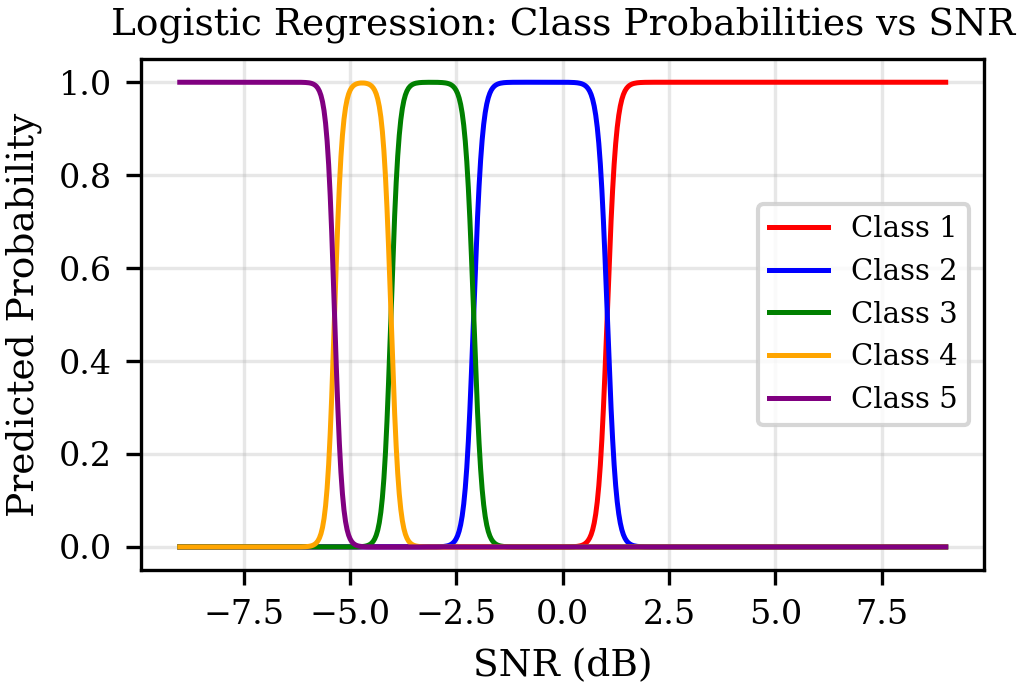}
    \end{subfigure}
    \begin{subfigure}[t]{\columnwidth}
        \centering
        \includegraphics[width=0.8\columnwidth]{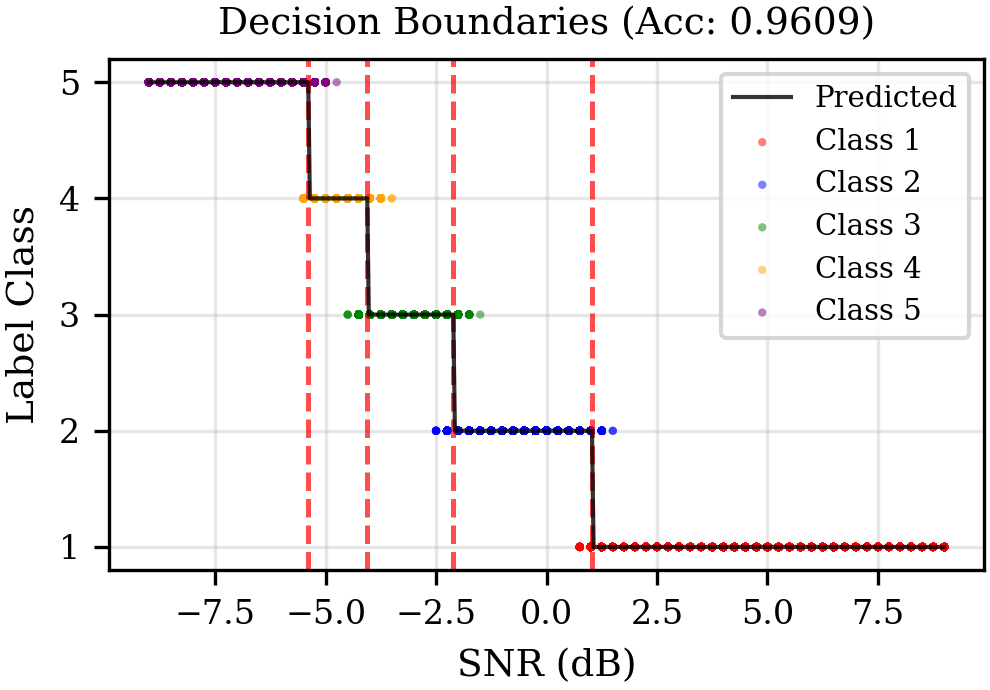}
    \end{subfigure}
    \caption{SNR-based resource prediction model for a representative 5G \ac{LDPC} code and modulation scheme. The colored points denote the true classes.}
    \label{fig:snr_prediction_model_ldpc}
\end{figure}

\subsubsection{Prediction Based on First-Transmission \acp{LLR}}

As an alternative to the \ac{SNR}-based approach, we exploit the \acp{LLR} of the first transmission directly to make per-codeword resource predictions, as described in Section~\ref{sec:contribution2}. This predictor bypasses the \ac{BP} decoder and outputs the five-class decision (decodable with the given resources, number of additional \acp{RV} needed, or undecodable).

The predictor is a one-dimensional CNN that treats the first-\ac{RV} \ac{LLR} vector as a sequence (Fig.~\ref{fig:llr_prediction_arch}): three convolutional blocks (Conv1d, batch normalization, ReLU, progressive pooling) followed by a fully connected head producing five-class logits. Inputs are clipped and $z$-score-normalized; training follows an $80\%$/$10\%$/$10\%$ split over the dataset. Optimization uses the Adam optimizer with cross-entropy loss, mini-batch training, and learning-rate scheduling. In the present configuration with four available redundancy versions, the network predicts one of five outcomes: the minimum number of \acp{RV} required for successful decoding, or that decoding is infeasible under the given \ac{MCS} and \ac{RV} order. If the predictor is extended to also suggest the \ac{RV} transmission order, additional output classes can be introduced.

\begin{figure}[t]
    \centering
    \includegraphics[width=0.99\columnwidth]{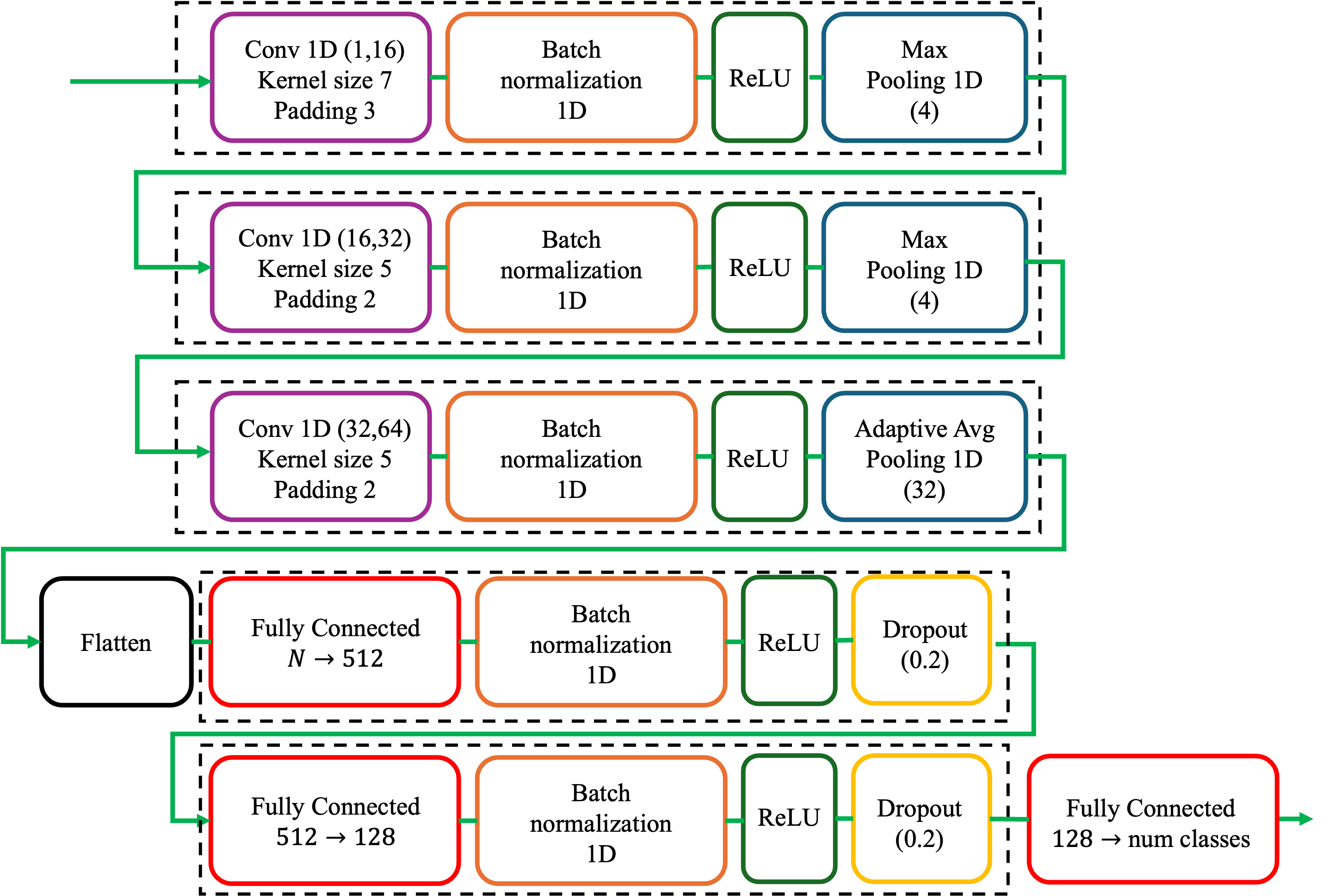}
    \caption{Architecture of the one-dimensional CNN predictor that maps first-transmission LLRs to resource allocation decisions.}
    \label{fig:llr_prediction_arch}
\end{figure}

The \ac{LLR}-based predictor achieves $96\%$ accuracy on unseen test data. Fig.~\ref{fig:llr_confusion} shows the per-class accuracy and confusion matrix. The dominant error mode is over-allocation by one \ac{RV}: the network predicts one additional redundancy version beyond what is strictly required. This type of error does not degrade decoding performance but marginally reduces resource efficiency. For Class~5 misclassifications, where the model predicts that all four \acp{RV} are needed but decoding is, in fact, infeasible, the error is similarly nocritical, since the \ac{BP} decoder will detect the failure after combining and trigger an \ac{MCS} adjustment.

\begin{figure}[t]
    \centering
    \begin{tabular}{cc}
        \hspace{-0.2cm}\includegraphics[width=0.48\columnwidth]{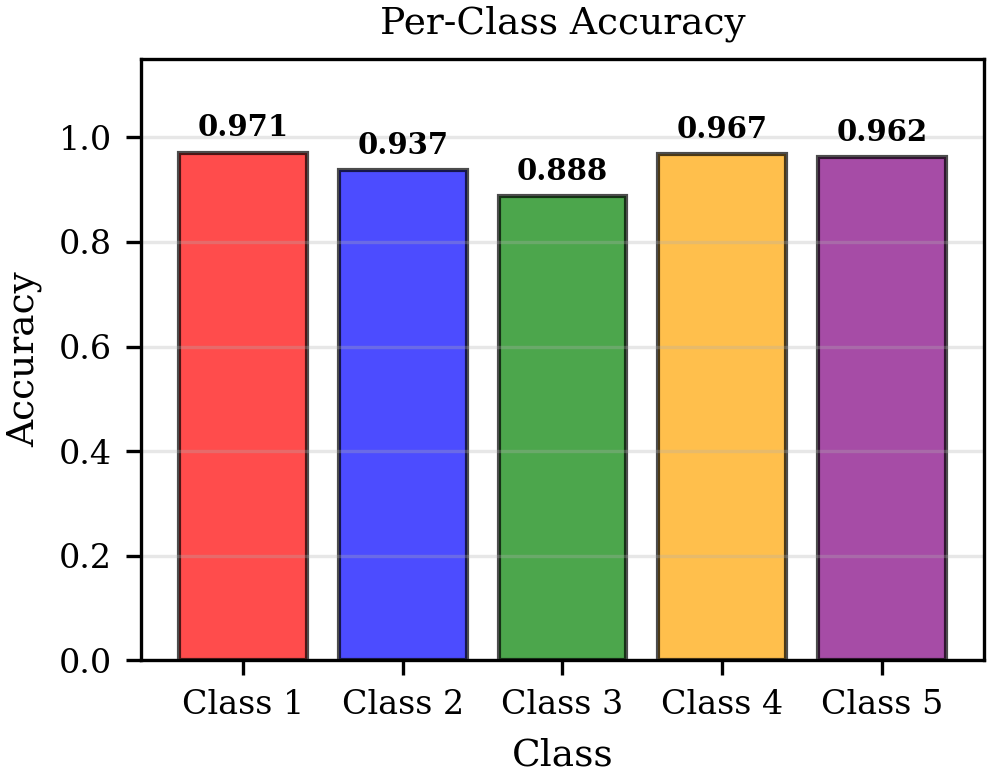} &\hspace{-0.2cm}
        \includegraphics[width=0.45\columnwidth]{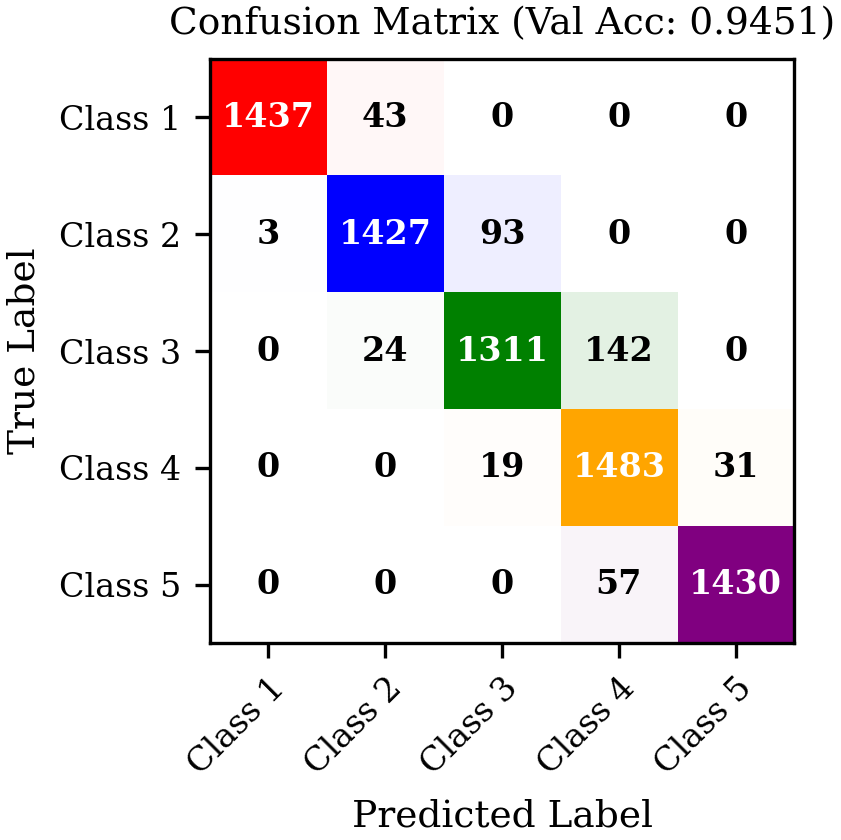}\\
        (a) & (b) 
    \end{tabular}
    \caption{Performance of the LLR-based predictor: (a) Per-class accuracy, showing the percentage of correct predictions for each of the five classes. (b) Confusion matrix, illustrating the distribution of true vs. predicted classes and highlighting the dominant error mode.}
    \label{fig:llr_confusion}
\end{figure}

The training dynamics of the classifier are illustrated in Fig.~\ref{fig:training_profile}, which shows that the model converges rapidly. The validation accuracy initially exceeds the training accuracy in the early epochs, because of dropout regularization (rate $= 0.3$), active only during training, which effectively reduces the model's capacity. At evaluation time, the full network is utilized. The ReduceLROnPlateau scheduler progressively reduces the learning rate upon validation accuracy plateaus, helping refine the model weights and achieve a best validation accuracy of 94.45\%.

\begin{figure}[t]
    \centering
    \includegraphics[width=0.99\linewidth]{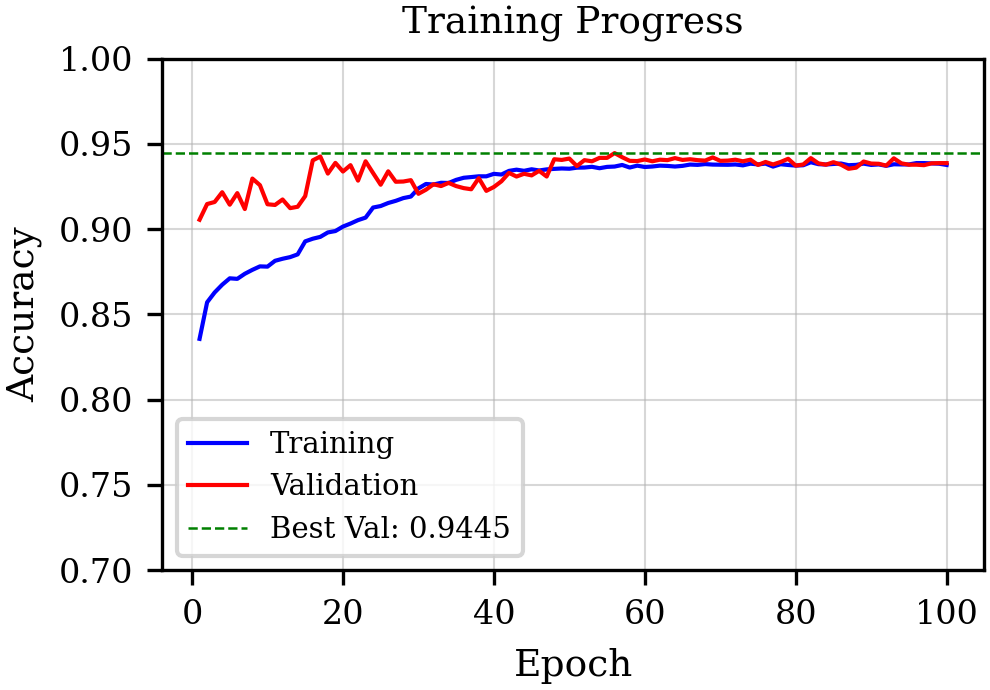}
    \caption{Training and validation accuracy of the classifier.}
    \label{fig:training_profile}
\end{figure}

\section{Conclusions}

\label{sec:conc}
We proposed enhanced feedback and scheduling incremental-redundancy hybrid automatic repeat request (IR-HARQ) schemes that predicts the required retransmission resources using 5G IR-HARQ for LDPC codes and a polar-coded IR-HARQ for future systems. We considered both an \ac{SNR}-based policy and an early-feedback predictor based on first-transmission \acp{LLR}. Link-level results with polar codes show that the reliabilty can be maximized with strict latency requirements and improved efficiency depending on the SNR. For the case of 5G NR \ac{LDPC} codes, simulations show that accurate prediction improves resource efficiency and reduces latency.

\newpage

\end{document}